\definecolor{darkgreen}{rgb}{0,0.5,0}
\newtheorem{THEOREM}{Theorem}
\newtheorem{COROLARY}[THEOREM]{Corollary}
\newtheorem*{nonumtheorem}{Theorem}
\newtheorem{theorem}{Theorem}[section]
\newtheorem{conjecture}[theorem]{Conjecture}
\newtheorem{definition}[theorem]{Definition}
\newtheorem{lemma}[theorem]{Lemma}
\newtheorem{remark}[theorem]{Remark}
\newcommand{\set}[2][]{{\left\{#2\right\}}^{#1}}
\newcommand{\bools}[1]{\set[#1]{0,1}}
\newcommand{\abs}[1]{\left|{#1}\right|}
\newcommand{\size}[1]{\abs{#1}}
\newcommand{\half}{\frac{1}{2}}
\newcommand{\N}{\mathbb{N}}
\newcommand{\Image}[1]{\mathrm{Im} \left( #1 \right) }
\newcommand{\poly}{\mathrm{poly}}
\newcommand{\ignore}[1]{}
\newcommand{\BPP}{\mathsf{BPP}}
\newcommand{\PTIME}{\mathsf{P}}
\newcommand{\ppoly}{\mathsf{P/poly}}
\newcommand{\AM}{\mathsf{AM}}
\newcommand{\SZK}{\mathsf{SZK}}
\newcommand{\ZPP}{\mathsf{ZPP}}
\newcommand{\NP}{\mathsf{NP}}
\newcommand{\coAM}{\mathsf{coAM}}
\newcommand{\pr}{\mathsf{pr}}
\newcommand{\MCSP}{\mathsf{MCSP}}
\newcommand{\MTKP}{\mathsf{MTKP}}
\newcommand{\Avoid}{\mathsf{Avoid}}
\newcommand{\HSP}{\mathsf{HSP}}
\begin{document}

\title{A Note on $\Avoid$ vs $\MCSP$}

\author{%
Edward A. Hirsch\thanks{Department of Computer Science, Ariel University, Israel. This research was conducted with the support of the State of Israel, the Ministry of Immigrant Absorption, and the Center for the Absorption of Scientists. Email: {\tt edwardh@ariel.ac.il}} \and Ilya Volkovich\thanks{Computer Science Department, Boston College, Chestnut Hill, MA. 
    Email: {\tt ilya.volkovich@bc.edu}}%
}

\date{}

\maketitle
\newcommand\EH[1]{\textcolor{darkgreen}{[EH: #1]}}

\begin{abstract}
A recent result of Ghentiyala, Li, and Stephens-Davidowitz (ECCC TR 25-210) shows that any language reducible to the Range Avoidance Problem ($\Avoid$) via deterministic or randomized Turing reductions is contained in $\AM \mathrel{\cap} \coAM$. In this note, we present a different potential avenue for obtaining the same result via the Minimal Circuit Size Problem ($\MCSP$).
\end{abstract}

\section{Introduction}

The \emph{Range Avoidance} ($\Avoid$) problem is defined as follows: given a Boolean circuit $C: \bools{n} \to \bools{m}$ with $m > n$, output a string $y \in \bools{m}$ such that $y \not \in \Image{C}$, i.e., such that $y$ does not have a pre-image w.r.t. $C$. The problem has been studied since 1980s in mathematical logic literature \cite{PWW88corr}, where it is referred to as the ``dual Weak Pigeonhole Principle'' (dWPHP). Its ``new life'' in the computational complexity theory has begun with the work of  Kleinberg, Korten, Mitropolsky, and Papadimitriou \cite{KKMP21}. Since then, the problem attracted significant attention 
(see e.g. \cite{GLS25} and the references therein for a survey from the complexity-theoretic perspective, and \cite{KraGen} for the body of work in the field of bounded arithmetic).

$\Avoid$ is a natural example of a total search problem, that is, a search problem for which a solution always exists. Furthermore, a randomly selected element of $\bools{m}$ is a solution with probability at least $1 - 2^{-(m-n)}$. Thus, the hardest instances\footnote{There is a stronger version corresponding to the dual version of the classical pigeonhole problem where the domain is just one element smaller than the image. This version has also been studied in the literature, yet its complexity properties are quite different.} occur when $m = n+1$, in which case the trivial randomized algorithm could succeed with probability close to $\half$. In order to amplify the success probability, one could, naturally, draw several random samples and check them. This will result in a ``zero-error'' $\ZPP^{\NP}$ algorithm. The $\NP$ oracle, however, appears to be necessary, since it is unclear how to \emph{efficiently} verify or select a correct solution, even when one is presented. This raises the following natural question:
\begin{quote}
\emph{``Can we amplify the success probability of solving $\Avoid$ without an $\NP$ oracle, or at least using an oracle for an easier problem''?}
\end{quote}

In the \emph{Minimum Circuit Size Problem} ($\MCSP$), we are given the truth table $tt \in \bools{2^n}$ of an $n$-variate Boolean function $f : \bools{n} \to \bools{}$ and a parameter $0 \leq s \leq 2^n$, and the goal is to determine whether $f$ can be computed by a Boolean circuit of size at most $s$. The problem shares a similar fate to $\Avoid$: although its origins trace back to the 1960s (see e.g. \cite{Trakhtenbrot84corr}), the interest in the problem was renewed following its reintroduction by Kabanets and Cai in \cite{KabanetsCai00}.
In that work, it was also observed that $\MCSP \in \NP$. However, the true complexity of the problem remains unknown. In particular, it is neither known nor believed to be $\NP$-hard under standard (deterministic) many-to-one reductions \cite{MW15corr,HP15}. Nevertheless, several recent results provide evidence of $\NP$-hardness of $\MCSP$ under more general notions of reduction \cite{ILO20,Ilango20b,Ilango23}. More precisely, \cite{ILO20,Ilango20b} establish $\NP$-hardness of \emph{variants} of $\MCSP$ (such as ``multi-output'' and partial) under \emph{randomized} many-to-one reductions   while the recent work  \cite{Ilango23} shows that $\MCSP$ is $\NP$-hard in the \emph{Randomized Oracle Model} and that $\MCSP^O$ --- the relativized version of $\MCSP$ --- is $\NP$-hard under $\ppoly$ reductions for a random oracle $O$. Finally, $\MCSP$ and, in fact, $\MCSP^B$ can be viewed as special cases of \emph{Natural Properties} \cite{RazborovRudich97,KabanetsCai00} for any oracle $B$.  

The purpose of this paper is to connect $\Avoid$ to $\MCSP$ in a natural setting:
under randomized Turing reductions.
Recently Ghentiyala, Li, and Stephens-Davidowitch \cite{GLS25}
connected it to Arthur--Merlin protocols and showed that $\BPP^{\Avoid}\subseteq\AM\mathrel{\cap}\coAM$
(a corresponding statement holds for the promise versions of these classes).
They also noted that for a modestly large stretch $m=n+\omega(\log n)$
the statement becomes trivial since this oracle $\Avoid_m$
does not bring more power to $\BPP$, namely, $\BPP^{\Avoid_m}=\BPP$.
We ask what is the complexity of $\Avoid$ for other values of $m$,
specifically for the hardest instance of the problem when $m=n+1$?
We show that even in this case
one can replace the $\Avoid$ oracle
with an oracle for $\MCSP$.

While, at face value, $\BPP^{\MCSP}$ and $\AM \cap \coAM$ appear incomparable given our current state of knowledge, an observation of Hirahara and Watanabe on \emph{oracle-independent} reductions to $\MCSP$ \cite{HW16} suggests that $\BPP^{\MCSP}$ may, ``after all'', be contained in $\AM \cap \coAM$. 
For a further discussion on this matter see Section~\ref{sec:oracle independent}.

\vspace{-0.05in}
\subsection{Results}

Our main result shows how to amplify the success probability of solving $\Avoid$ given oracle access to $\MCSP$.

\begin{THEOREM}
\label{THM:main1} 
There exists a randomized algorithm that given $\varepsilon > 0$, a Boolean circuit $C: \bools{n} \to \bools{m}$ (for $m>n$), and oracle access to $\MCSP$, outputs a string $y \in \bools{m}$ such that $y \not \in \Image{C}$, with probability at least $1 - \varepsilon$, in time polynomial in $1/\varepsilon$ and the size of $C$.
\end{THEOREM}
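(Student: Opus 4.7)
My approach is not to verify a candidate $y \notin \Image{C}$ directly but to route $\Avoid$ through its equivalence with the \emph{hard-truth-table} problem, which the $\MCSP$ oracle can settle trivially. Specifically, Korten's theorem on $\Avoid$ yields a deterministic polynomial-time Turing reduction from $\Avoid$ on a circuit $C:\bools{n}\to\bools{m}$ of size $s$ to the problem $\mathsf{HARD}$: given parameters $(N,s')$ with $s'<N/(10\log N)$, output a truth table $T\in\bools{N}$ of circuit complexity exceeding $s'$. Crucially, the $\mathsf{HARD}$ instances this reduction produces have $N=\poly(s)$ and $s'=\poly(s)$, so the relevant truth tables are polynomial-size and $\MCSP$ queries on them run in polynomial time.

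The $\mathsf{HARD}$ problem is in turn handled by a trivial $\BPP^{\MCSP}$ subroutine: draw $T \in \bools{N}$ uniformly and return it if $\MCSP(T,s')$ answers NO. The standard circuit-counting bound $\bigl(O(s')\bigr)^{O(s')} \leq 2^{N-1}$ (using $s' < N/(10\log N)$) ensures that a single random draw succeeds with probability $1-2^{-\Omega(N)}$, so the per-call failure probability can be driven below $\varepsilon/\poly(s)$ at polynomial cost. Plugging this subroutine into Korten's reduction and taking a union bound over the polynomially many $\mathsf{HARD}$ oracle calls yields overall failure probability at most $\varepsilon$ within the required $\poly(s,1/\varepsilon)$ running time.

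A simpler warm-up, worth stating for intuition, handles the ``mild-stretch'' regime $m/\log m \gg |C|+O(m)$: viewing a candidate $y \in \bools{m}$ as the truth table of $g_y:\bools{\lceil\log m\rceil}\to\bools{}$, any $y \in \Image{C}$ satisfies $\mathrm{CC}(g_y) \leq |C|+O(m)$ (hardcode a preimage, evaluate $C$, select the $i$-th output bit), whereas a uniformly random $y$ has $\mathrm{CC}(g_y) \geq m/\log m$ by counting. Hence one sample plus one $\MCSP$ call per trial already suffices. It is precisely the hard regime $m = n+1$, where $|C|$ dwarfs $m$, that forces the detour through Korten.

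\textbf{Main obstacle.} The heart of the plan is twofold. First, I need to confirm that Korten's reduction is a bona fide polynomial-time Turing reduction with no hidden $\NP$ oracle and that its $\mathsf{HARD}$ queries fall within the polynomial truth-table-size regime; these points should be standard but must be checked carefully, since any implicit $\NP$-style step would collapse the whole plan back into $\BPP^{\NP}$. Second, to align with the $\AM\cap\coAM$ agenda of Section~\ref{sec:oracle independent}, the combined algorithm should be presented as an \emph{oracle-independent} reduction in the sense of Hirahara-Watanabe; since the sole use of the oracle is to ask $\MCSP(T,s')$ on fresh uniformly random truth tables and branch on the Boolean answer, this should go through, but pinning down the precise formal notion of oracle-independence that transfers the Hirahara-Watanabe observation is the second delicate point.
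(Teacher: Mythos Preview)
Your plan is sound but follows a genuinely different route from the paper. The paper does not go through Korten's APEPP-completeness at all; instead it invokes the Allender--Buhrman--Kouck\'y--van~Melkebeek--Ronneburger inversion lemma (Lemma~\ref{lem:inv}): a $\BPP^{\MCSP}$ machine $M$ that, on a uniformly random $x$, recovers a preimage of $C(x)$ with probability $1-\varepsilon$. The algorithm then simply draws $y\in\bools{m}$ uniformly, runs $M$ on $(C,y)$, and outputs $y$ whenever $M$ fails to produce a preimage; Lemma~\ref{lem:succ} shows that, conditioned on such a failure, $y\notin\Image{C}$ with probability at least $1-\varepsilon$, while $t$ independent repetitions drive the probability of never seeing a failure below $2^{-t}$.

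The paper's argument is the more self-contained of the two: it needs only the classical ``$\MCSP$ breaks one-way functions'' fact rather than Korten's GGM-style reduction, and the whole proof fits in a few lines once Lemmas~\ref{lem:inv} and~\ref{lem:succ} are in hand. Your route has the compensating advantage that the oracle's role is completely transparent---it is used solely to certify that a freshly sampled truth table is hard---which makes both correctness and oracle-independence immediate: since every ordinary circuit is a $B$-oracle circuit, an answer $\MCSP^B(T,s')=\mathrm{NO}$ still certifies ordinary hardness, and the same counting bound still guarantees that a random $T$ passes with probability $1-2^{-\Omega(N)}$. As for your two stated obstacles: Korten's reduction is indeed a deterministic polynomial-time Turing reduction whose $\mathsf{HARD}$ queries have $N=\poly(|C|)$ and threshold $s'=\Theta(N/\log N)$, with no hidden $\NP$ step; and the observation just made settles the oracle-independence question.
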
 

As an immediate corollary, we obtain an upper bound on the set of all languages (and promise problems) that reduce to $\Avoid$ via randomized Turing reductions.

\begin{COROLARY}\label{cor:cor}
$(\pr)\BPP^{\Avoid} \subseteq (\pr)\BPP^{\MCSP}$.
\end{COROLARY}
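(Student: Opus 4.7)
The plan is to derive Corollary 1.2 from Theorem 1.1 by the standard oracle-simulation argument: simulate every $\Avoid$ query of a given $(\pr)\BPP^{\Avoid}$ machine by invoking the algorithm of Theorem 1.1 with sufficiently small error, and absorb the cumulative error via a union bound.

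Concretely, I would start by fixing a (promise) problem in $\pr\BPP^{\Avoid}$ decided, say with error at most $1/6$, by a randomized polynomial-time oracle machine $M$ that makes at most $q = q(|x|)$ queries to $\Avoid$ on input $x$. I would then construct a $\pr\BPP^{\MCSP}$ machine $M'$ that, on input $x$, simulates $M(x)$ step by step: each internal step of $M$ is executed verbatim, and whenever $M$ issues an $\Avoid$ query on a circuit $C \colon \bools{n} \to \bools{m}$, $M'$ instead invokes the algorithm of Theorem 1.1 with error parameter $\varepsilon := 1/(6q)$ and the $\MCSP$ oracle, then feeds the returned string $y \in \bools{m}$ back to $M$ as the oracle response. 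When $M$ halts, $M'$ outputs whatever $M$ outputs.

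For the correctness analysis, I would apply Theorem 1.1 to each of the at most $q$ queries: each call returns an element of $\bools{m} \setminus \Image{C}$ with probability at least $1 - \varepsilon$. A union bound shows that every oracle response is a valid $\Avoid$ solution with probability at least $1 - q\varepsilon = 5/6$; conditioning on this event, $M'$ executes an honest run of $M$ with a legitimate $\Avoid$ oracle and therefore decides correctly with probability at least $5/6$. Multiplying, the overall success probability is at least $(5/6)^2 > 1/2$, which amplifies to any desired constant in $\BPP$ by independent repetition and majority vote. The running time is polynomial, since each invocation of Theorem 1.1 costs $\poly(1/\varepsilon, |C|) = \poly(q, |x|)$ and there are at most $q$ of them.

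I do not anticipate a real obstacle here: the corollary is essentially a formal consequence of Theorem 1.1. The promise-problem case requires no additional work, since Theorem 1.1 makes no assumption on the input circuit beyond $m > n$, and $\MCSP$ is a language, so the simulated oracle side is promise-free. The only parameter choice is the error budget $\varepsilon = 1/(6q)$, which is standard; any inverse-polynomial $\varepsilon$ suffices. All substantive content is located in the proof of Theorem 1.1 itself.
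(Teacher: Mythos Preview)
Your proposal is correct and takes essentially the same approach as the paper: the paper's proof also simulates each $\Avoid$ query via Theorem~\ref{THM:main1} with per-query error $\varepsilon = 1/(8p(n))$ (where $p(n)$ bounds the running time and hence the number of queries) and applies a union bound so that all queries are answered correctly with probability at least $7/8$, keeping the overall error bounded. The only difference is cosmetic---your choice of constants and the slightly more explicit error accounting.
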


\vspace{-0.15in}
\section{Preliminaries}

We begin by formally defining and recalling the relevant problems.

\begin{definition}[{$\protect\Avoid$}]
Given a Boolean circuit $C: \bools{n} \to \bools{m}$ with $m>n$, find an string $y \in \bools{m}$ such that $y \not \in \Image{C}$.
\end{definition}

\begin{definition}[{$\protect\MCSP$, $\protect\MCSP^B$}]
Given the truth table $tt \in \bools{2^n}$ of an $n$-variate Boolean function $f : \bools{n} \to \bools{}$ and a parameter $0 \leq s \leq 2^n$, decide whether $f$ can be computed by a Boolean circuit of size at most $s$. In the relativized version of the problem, $\MCSP^B$, the Boolean circuit is also allowed to use $B$-oracle gates.
\end{definition}

\subsection{Inversion of polynomial-time computable functions}

Allender et al.~\cite{ABKMR06} showed that sufficiently dense sets of strings with sufficiently high time-bounded Kolmogorov complexity (in an appropriate sense)\footnote{%
Namely, one needs a language $L$ containing at least $2^{n}/n^k$ bit strings of each length $n$
such that for every $x \in L, \mathsf{KT}(x) \geq |x|^{1/k}$, where $\mathsf{KT}$ denotes their version of
time-bounded Kolmogorov complexity.%
}
can be used to break (i.e., invert) any polynomial-time computable function. They further observed that such sets can be constructed in
$\PTIME^{\MCSP}$, and hence an $\MCSP$ oracle will suffice for this purpose:

\begin{lemma}[{\protect\cite[Theorem 45 and discussion after that]{ABKMR06}}]\label{lem:Th45}
Let $f_z(x) = f(z, x)$ be a function computable uniformly 
in time polynomial in $\size{x}$. There exists a probabilistic oracle Turing machine $A^\bullet$ and $k \in \N$ such that for any $n$ and $z$:
$$\Pr_{\size{x} = n,\;\tau} \left[ f_z \left( A^{\MCSP}(z, f_z(x), \tau) \right) = f_z(x) \right] \geq 1/n^k,$$
where $x$ is chosen uniformly at random and $\tau$ denotes the randomness of $A$;
this procedure runs in time polynomial in $|x|$ (and $|z|$ and $|f_z(x)|$, but it does not matter due to
the uniformity condition).
\end{lemma}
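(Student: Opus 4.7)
The plan is to adapt the two-stage ``hard strings to inversion'' strategy of Allender et al.~\cite{ABKMR06}: first, use the $\MCSP$ oracle to sample strings of high time-bounded Kolmogorov complexity $\mathsf{KT}$; second, use such a hard string as the seed of a pseudorandom generator that fools the polynomial-size ``inversion test'' associated with $f_z$.

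For Stage 1, I would rely on the standard fact that $\mathsf{KT}(w)$ is, up to polynomial factors, equal to the size of the smallest circuit whose truth table is $w$. Hence a single call to the $\MCSP$ oracle decides, for any candidate $w \in \bools{N}$ (interpreted as a truth table on $\log N$ variables) and threshold $s$, whether $\mathsf{KT}(w) \geq s$. A counting argument shows that at most $2^{O(s \log s)}$ strings of length $N$ are $\mathsf{KT}$-compressible to size $s$, so with $s = N^{\Omega(1)}$ a uniformly random $w$ is hard with probability $1 - o(1)$. Rejection sampling with $\MCSP$-driven filtering then produces, in randomized polynomial time, a string $w$ with $\mathsf{KT}(w) \geq s$ after a logarithmic number of trials in expectation.

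For Stage 2, the inverter $A^{\MCSP}(z, y, \tau)$ parses $\tau$ into two parts: one part is used to sample a hard string $w$ as in Stage 1, and the other, together with $w$, drives a Nisan--Wigderson style pseudorandom generator $G(w, \cdot)$ whose polynomially many outputs $\tilde{x} = G(w, i)$ are enumerated; the algorithm returns any $\tilde{x}$ satisfying $f_z(\tilde{x}) = y$. The key observation is that the ``inversion test'' $\tilde{x} \mapsto [f_z(\tilde{x}) = y]$ is computable by a polynomial-size Boolean circuit once $z$ and $y$ are hardwired, so by the hardness-to-randomness connection the outputs of $G(w, \cdot)$ fool this test with inverse-polynomial error whenever $w$ has sufficiently large $\mathsf{KT}$. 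Averaging over the uniformly random $x$ with $y = f_z(x)$ then yields success probability at least $1/n^k$ over the joint randomness of $x$ and $\tau$.

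The main obstacle is making the reconstruction step of the NW analysis quantitatively consistent with the $\mathsf{KT}$ framework. A failure of $G(w, \cdot)$ to fool the inversion test must translate not merely into a small \emph{circuit} for $w$, but into a short description that can also be \emph{executed} within a polynomial time bound, so as to contradict $\mathsf{KT}(w) \geq s$ rather than just a plain circuit-size lower bound. Controlling the size blow-up of the reconstruction and bounding the extra runtime needed to simulate the failing inverter inside the reconstruction procedure are the delicate bookkeeping steps that constitute the technical core of the argument; once they are carried out, the two stages compose to yield the claimed inverse-polynomial inversion bound in $\PTIME^{\MCSP}$.
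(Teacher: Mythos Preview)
The paper does not prove this lemma; it is quoted from \cite{ABKMR06} and used as a black box (the appendix only sketches how to derive Lemma~\ref{lem:inv} from it via the weak-to-strong one-way function transformation). So there is no proof in the paper to compare against, and your sketch is really an attempt to reconstruct the argument of \cite{ABKMR06}.

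Your Stage~1 is fine: circuit size of a truth table and $\mathsf{KT}$ are polynomially related, so rejection sampling with an $\MCSP$ filter does produce high-$\mathsf{KT}$ strings.

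Stage~2, however, has a genuine gap. You propose to treat the hard string $w$ as the hard function in a Nisan--Wigderson generator $G(w,\cdot)$, enumerate its (polynomially many) outputs, and return any output that is a preimage of $y$. The test you want $G(w,\cdot)$ to fool is $T_y(\tilde x)=[f_z(\tilde x)=y]$. But for $y=f_z(x)$ the acceptance probability of $T_y$ on a \emph{uniform} $\tilde x$ is $|f_z^{-1}(y)|/2^n$, which is $2^{-n}$ when $f_z$ is injective. Fooling $T_y$ to within inverse-polynomial error is then vacuous: the generator may have acceptance probability~$0$ on $T_y$ and still fool it. Averaging over $x$ does not help, since for injective $f_z$ the expected acceptance probability stays $2^{-n}$. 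So enumeration of NW outputs gives no inverse-polynomial inversion guarantee.

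The actual \cite{ABKMR06} argument runs the hardness/randomness connection in the \emph{opposite} direction. Assume toward contradiction that $f_z$ is (weakly) one-way, build a pseudorandom generator $G_{f_z}$ from it via Goldreich--Levin/H{\aa}stad--Impagliazzo--Levin--Luby, and observe that every output of $G_{f_z}$ has \emph{low} $\mathsf{KT}$ (it is determined by a short seed plus the description of $f_z$, and is computable in polynomial time). Hence the $\PTIME^{\MCSP}$-decidable set of high-$\mathsf{KT}$ strings is a statistical test that breaks $G_{f_z}$; unwinding the reduction converts this distinguisher into an inverter for $f_z$. In short, the high-$\mathsf{KT}$ strings are used as \emph{distinguishers} against a PRG built from $f_z$, not as the hard function inside an NW generator enumerating candidate preimages. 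Your ``main obstacle'' paragraph about reconstruction yielding a time-bounded description is exactly the right concern, but it belongs to this HILL-based argument, not to the NW-based one you outlined.
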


By observing that evaluating a given Boolean circuit on an input $x$ can be carried uniformly in time polynomial in the size of the circuit, 
and using a standard transformation 
between strong and weak one-way functions we obtain the following:

\begin{lemma}
\label{lem:inv}
There exists a probabilistic oracle Turing machine $M^\bullet$ that given $\varepsilon > 0$ and a Boolean circuit $C$ on $n$ inputs and $m=n+1$ outputs, runs in time $\poly(\size{C}, 1/\varepsilon$) 
and satisfies:
$$\Pr_{\size{x} = n,\; \tau} \left[ C \left( M^{\MCSP}(1/\varepsilon, C, C(x), \tau) \right) = C(x) \right] \geq 1 - \varepsilon,$$
where $x$ is chosen uniformly at random and $\tau$ denotes the randomness of $M$.
\end{lemma}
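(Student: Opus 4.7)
The plan is to amplify the $1/n^k$ success probability guaranteed by Lemma~\ref{lem:Th45} up to $1-\varepsilon$ via Yao's direct-product amplification for one-way functions, exploiting the fact that Lemma~\ref{lem:Th45} applies to \emph{any} uniformly polynomial-time computable function, not just to $C$ itself.

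First, I would apply Lemma~\ref{lem:Th45} not to $C$ directly but to the $t$-fold direct product $F_t(x_1,\ldots,x_t) = (C(x_1),\ldots,C(x_t))$. Using $z = (C, 1^{t})$ as the description, $F_t$ is computable uniformly in time polynomial in its total input length $nt$, so Lemma~\ref{lem:Th45} furnishes an oracle machine $B^{\bullet}$ that, on uniformly random $\bar x$ and random $\tau$, inverts $F_t$ with probability at least $1/(nt)^{k}$.

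Second, I would invoke the contrapositive of the standard Yao weak-to-strong one-way-function transformation: given any efficient inverter for the $t$-fold direct product $F_t$ with non-negligible success probability $\delta$, one obtains an efficient inverter for $C$ itself with success probability at least $1 - 1/p(n)$, provided that $t \geq n\cdot p(n)$. Choosing $p(n)=1/\varepsilon$ and $t=n/\varepsilon$, the probability $\delta = 1/(nt)^{k}$ delivered by the previous step is inverse-polynomial in $n$ and $1/\varepsilon$, and the resulting algorithm $M^{\bullet}$ inverts $C$ with probability at least $1-\varepsilon$. All oracle queries remain $\MCSP$ queries, and Yao's reduction runs in time $\poly(n, t, 1/\delta) = \poly(\size{C}, 1/\varepsilon)$.

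The main obstacle is just parameter bookkeeping: the value of $t$ must be large enough for Yao's amplification to reach probability $\geq 1-\varepsilon$, yet small enough so that $(nt)^{k}$, and hence $1/\delta$, remain polynomial in $n$ and $1/\varepsilon$. The choice $t = n/\varepsilon$ satisfies both requirements simultaneously, so the overall running time stays within $\poly(\size{C}, 1/\varepsilon)$, as claimed.
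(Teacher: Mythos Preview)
Your proposal is correct and follows essentially the same route as the paper: apply Lemma~\ref{lem:Th45} to the $t$-fold direct product of $C$ with $t=O(n/\varepsilon)$, then run Yao's weak-to-strong amplification in the contrapositive direction to obtain a $(1-\varepsilon)$-inverter for $C$. The only point you gloss over is the uniformity hypothesis of Lemma~\ref{lem:Th45}: evaluating $C$ takes time $\poly(\size{C})$, not $\poly(n)$, so when $\size{C}\gg n$ the direct product is \emph{not} computable in time polynomial in its input length $nt$; the paper patches this by padding $C$ with identity wires until $n\geq\sqrt{\size{C}}$, which you should also do.
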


    Although the result of \cite{ABKMR06} has been used in this form before (see, e.g., \cite{AGMMM18}), for completeness we formally derive it from its original form (Lemma~\ref{lem:Th45})  in Section~\ref{sec:app:miss} of the Appendix. 
    
\begin{remark}
\label{rem:oracle independent}
One can further observe that the $\MCSP$ oracle in Lemmas~\ref{lem:Th45} and~\ref{lem:inv} can be replaced with $\MCSP^B$ for any language $B$ and, in fact, with any exponentially-useful ``natural property'' in the sense of $\cite{RazborovRudich97}$.
This includes $\mathsf{MKTP}$ and other measures.
See further discussion in Section \ref{sec:oracle independent}.
\end{remark}

For notational convenience, we denote the following event of ``successful inversion'':
\begin{definition}
 $I_{C,y,\tau}$ denotes the event that $C \left( M^{\MCSP}(1/\varepsilon, C, y, \tau) \right) = y$. That is, the event that given oracle access to $\MCSP$ and randomness $\tau$, the machine $M$ outputs a pre-image of $y$ under $C$.
\end{definition}

\noindent In particular, note that for any $y \not \in \Image{C}$ and any $\tau$ we have that $\Pr_{\tau} \left[ I_{C,y, \tau} \right] = 0$. Furthermore, given the above notation, we can succinctly rephrase Lemma \ref{lem:inv} as $$\Pr_{\size{x} = n,\, \tau} \left[ I_{C,C(x), \tau} \right] \geq 1 - \varepsilon. $$

Fix a Boolean circuit $C: \bools{n} \to \bools{m}$.
The following lemma relates the probability of inverting a random \emph{input} of $C$ with the probability of inverting a random \emph{element} in the co-domain of $C$, i.e., $\bools{m}$. We include this (very standard) proof for the sake of self-containment:

\begin{lemma}
$\Pr \limits_{\size{y} = m, \tau} \left[ I_{C, y, \tau} \right] \leq \frac{1}{2^{m-n}} \cdot \Pr \limits_{\size{x} = n, \tau} \left[ I_{C,C(x), \tau} \right].$
\end{lemma}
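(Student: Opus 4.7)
The plan is to exploit the fact that the event $I_{C,y,\tau}$ can only occur when $y \in \Image{C}$, together with the observation that drawing $x$ uniformly from $\bools{n}$ and then mapping through $C$ makes each image point $y$ appear with probability $|C^{-1}(y)|/2^n \geq 1/2^n$, whereas drawing $y$ uniformly from $\bools{m}$ makes $y$ appear with probability exactly $1/2^m$. The ratio of these two weighting schemes on any fixed $y \in \Image{C}$ is at least $2^m/2^n = 2^{m-n}$, which is exactly the factor we need.

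Concretely, I would first note that if $y \notin \Image{C}$, then by definition of $I_{C,y,\tau}$ the event cannot occur, since its occurrence demands that $C(M^{\MCSP}(\cdot,C,y,\tau)) = y$, forcing $y$ to have a preimage. Hence
\[
\Pr_{|y|=m,\,\tau}\!\left[I_{C,y,\tau}\right] \;=\; \frac{1}{2^{m}}\sum_{y \in \Image{C}} \Pr_{\tau}\!\left[I_{C,y,\tau}\right].
\]
Next, I would rewrite the right-hand side of the target inequality by conditioning on the value $y = C(x)$:
\[
\Pr_{|x|=n,\,\tau}\!\left[I_{C,C(x),\tau}\right] \;=\; \sum_{y \in \Image{C}} \frac{|C^{-1}(y)|}{2^n}\,\Pr_{\tau}\!\left[I_{C,y,\tau}\right] \;\geq\; \frac{1}{2^n}\sum_{y \in \Image{C}} \Pr_{\tau}\!\left[I_{C,y,\tau}\right],
\]
using only that $|C^{-1}(y)| \geq 1$ for every $y$ in the image.

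Combining the two displays gives
\[
\Pr_{|x|=n,\,\tau}\!\left[I_{C,C(x),\tau}\right] \;\geq\; \frac{2^{m}}{2^{n}}\,\Pr_{|y|=m,\,\tau}\!\left[I_{C,y,\tau}\right] \;=\; 2^{m-n}\,\Pr_{|y|=m,\,\tau}\!\left[I_{C,y,\tau}\right],
\]
which rearranges to the claimed bound. There isn't really a hard step here; the only thing to be careful about is keeping the weighting of $y$ by the number of its preimages in view, so as not to lose a factor anywhere. The lemma is a textbook change-of-variables argument, and I would keep the write-up short accordingly.
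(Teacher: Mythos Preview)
Your proof is correct and essentially identical to the paper's: both arguments decompose the two probabilities by the value $y=C(x)$, use that $\Pr_\tau[I_{C,y,\tau}]=0$ for $y\notin\Image{C}$, and then invoke $|C^{-1}(y)|\geq 1$ (equivalently $\Pr_x[C(x)=y]\geq 1/2^n$) on each image point to pick up the factor $2^{m-n}$. The only difference is cosmetic ordering of the displays.
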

\begin{proof}
\begin{gather*}
\Pr_{x, \tau} \left[ I_{C,C(x), \tau} \right] = 
\sum_{y \in \bools{m}} \Pr_{\tau} \left[ I_{C,y, \tau} \right] \cdot  \Pr_{x} \left[ C(x) = y \right] =
\sum_{y \in \Image{C}} \Pr_{\tau} \left[ I_{C,y, \tau} \right] \cdot  \Pr_{x} \left[ C(x) = y \right] \geq \\ 
\sum_{y \in \Image{C}} \Pr_{\tau} \left[ I_{C,y, \tau} \right] \cdot \frac{1}{2^n} = 
\frac{2^m}{2^n} \cdot \frac{1}{2^m} \cdot \sum_{y \in \bools{m}} \Pr_{\tau} \left[ I_{C,y, \tau} \right] = 
\frac{2^m}{2^n} \cdot \Pr_{y,\tau} \left[ I_{C,y, \tau} \right].\qedhere
\end{gather*}    
\end{proof}

Our next lemma shows that if the inverter from Lemma~\ref{lem:inv} fails to produce a pre-image for a random element $y$ then with high probability this element $y$ has no such pre-image. That is, $y$ lies outside the range of $C$.

\begin{lemma}
\label{lem:succ}
$\Pr \limits_{\size{y} = m, \tau} [y \not \in \Image{C} \mid  \bar{I}_{C, y, \tau} ] \geq  \Pr \limits_{\size{x} = n, \tau} \left[ I_{C,C(x), \tau} \right]$.
\end{lemma}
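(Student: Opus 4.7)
My plan is to expand $\Pr[y\notin\Image{C}\mid\bar{I}_{C,y,\tau}]$ via Bayes' rule, reduce the claim to a bound on $\alpha := \Pr_{\size{y}=m,\tau}\!\left[I_{C,y,\tau}\right]$, and then derive the needed \emph{lower} bound on $\alpha$. Setting $A(y) := \Pr_{\tau}\!\left[I_{C,y,\tau}\right]$, the remark immediately preceding the lemma gives $A(y)=0$ for $y\notin\Image{C}$, so the event $y\notin\Image{C}$ forces $\bar{I}_{C,y,\tau}$. Bayes then yields
$$\Pr\limits_{\size{y}=m,\tau}\!\left[y\notin\Image{C}\mid\bar{I}_{C,y,\tau}\right] \;=\; \frac{\Pr[y\notin\Image{C}]}{\Pr[\bar{I}_{C,y,\tau}]} \;=\; \frac{1-q}{1-\alpha},$$
where $q := |\Image{C}|/2^m$. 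Setting $p^* := \Pr_{\size{x}=n,\tau}\!\left[I_{C,C(x),\tau}\right]$, the target inequality becomes $p^*(1-\alpha) \leq 1-q$, equivalently a suitable lower bound on $\alpha$.

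The main obstacle I foresee is that the preceding lemma supplies only an \emph{upper} bound on $\alpha$ in terms of $p^*$, which points in the wrong direction. I would instead derive a lower bound from the identity
$$2^n\,p^* \;=\; \sum_{y\in\Image{C}} A(y)\,|C^{-1}(y)| \;\leq\; \sum_{y\in\Image{C}} A(y) \;+\; \sum_{y\in\Image{C}}\bigl(|C^{-1}(y)|-1\bigr) \;=\; 2^m\alpha \;+\; \bigl(2^n - |\Image{C}|\bigr),$$
using $A(y)\leq 1$ and $\sum_{y\in\Image{C}}|C^{-1}(y)|=2^n$. Rearranging yields $\alpha \geq 2^{n-m}(p^*-1)+q$. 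Morally: since every $y\in\Image{C}$ has at least one preimage, an inverter that succeeds often on the $C(x)$-distribution must also succeed nontrivially often on a uniformly random $y$.

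Plugging this bound into $p^*(1-\alpha)\leq 1-q$ and simplifying, the claim reduces to $(1-p^*)\bigl[\,2^{n-m}p^* - (1-q)\,\bigr]\leq 0$. Since $1-p^*\geq 0$, it suffices to check that $2^{n-m}p^*\leq 1-q$. But $p^*\leq 1$ and $q\leq 2^{n-m}$, while the hypothesis $m\geq n+1$ forces $2\cdot 2^{n-m}\leq 1$, so $2^{n-m}p^* \leq 2^{n-m} \leq 1 - 2^{n-m} \leq 1-q$, as required.
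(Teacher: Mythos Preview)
Your argument is correct. Both you and the paper start from Bayes' rule to write the conditional probability as $(1-q)/(1-\alpha)$, where $q=\abs{\Image{C}}/2^m$ and $\alpha=\Pr_{\size{y}=m,\tau}[I_{C,y,\tau}]$. From there the paper passes through the intermediate quantity $\dfrac{1-2^{n-m}}{1-2^{n-m}p^*}$, bounding the numerator via $q\le 2^{n-m}$ and the denominator, ostensibly, via the preceding lemma's inequality $\alpha\le 2^{n-m}p^*$; it then checks $\dfrac{2^{m-n}-1}{2^{m-n}-p^*}\ge p^*$ by elementary algebra (equivalent to $2^{m-n}\ge 1+p^*$). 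You correctly flag that the preceding lemma is an \emph{upper} bound on $\alpha$, which does not by itself upper-bound the denominator $1-\alpha$, and you instead establish the \emph{lower} bound $\alpha\ge 2^{n-m}(p^*-1)+q$ from the counting identity $2^n p^*=\sum_{y\in\Image{C}}A(y)\,\abs{C^{-1}(y)}$ together with $A(y)\le 1$. This buys a clean reduction to $(1-p^*)\bigl(2^{n-m}p^*-(1-q)\bigr)\le 0$, which follows immediately from $m\ge n+1$ and $q\le 2^{n-m}$. In effect your route supplies exactly the direction the paper's chain needs and makes the argument fully rigorous, at the cost of one short extra counting step; indeed, plugging your lower bound on $\alpha$ back in is also precisely what is required to justify the paper's intermediate inequality.
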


\begin{proof}
\begin{gather*}
\Pr \limits_{y, \tau} [y \not \in \Image{C} \mid  \bar{I}_{C, y, \tau}] = 
\frac{\Pr \limits_{y, \tau} [y \not \in \Image{C} \wedge  \bar{I}_{C, y, \tau}]}{\Pr \limits_{y, \tau} [\bar{I}_{C, y, \tau}]} = 
\frac{\Pr \limits_{y, \tau} [y \not \in \Image{C}]}{\Pr \limits_{y, \tau} [\bar{I}_{C, y, \tau}]} \geq 
\frac{1 - \frac{1}{2^{m-n}}}{1-\frac{1}{2^{m-n}} \cdot \Pr \limits_{\size{x} = n, \tau} \left[ I_{C,C(x), \tau} \right]} = \\
\frac{2^{m-n}-1}{2^{m-n} - \Pr \limits_{\size{x} = n, \tau} \left[ I_{C,C(x), \tau} \right]} \geq  \Pr \limits_{\size{x} = n, \tau} \left[ I_{C,C(x), \tau} \right].\qedhere
\end{gather*}

\end{proof}

\section{Reducing $\protect\Avoid$ to $\protect\MCSP$: A Proof of the Main Result}

We are now ready to prove our main result.

\begin{proof}[{Proof of Theorem~\protect\ref{THM:main1}.}]
Let $M^\bullet$ be the machine from Lemma~\ref{lem:inv}.
The following procedure solves $\Avoid$ on input $C:\bools{n} \to \bools{m}$ with high probability: 
\begin{enumerate}
    \item Repeat the following $t$ times:
    \begin{enumerate}
    \item Pick $y \in \bools{m}$ uniformly at random.
    \item If  $C \left( M^{\MCSP}(1/\varepsilon,C, y, \tau) \right) \neq y$ then Output $y$; otherwise continue.
    \end{enumerate}
    \item Output $\bot$.
\end{enumerate}
This procedure has two sources of error:
\begin{enumerate}
    \item The procedure outputs an element of $\Image{C}$. By Lemmas~\ref{lem:inv} and~\ref{lem:succ} this probability is at most $\varepsilon t$.
    
    \item The procedure outputs $\bot$. It happens only if all the random picks fall inside $\Image{C}$ (and all the executions of $M^{\MCSP}$ succeed, but it does not really matter).
    This probability is at most $(2^{n-m})^t \leq 1/2^t$ even when $m=n+1$.
\end{enumerate}

The total error is therefore at most $\varepsilon t + 2^{-t}$,
which can be made less than any $\varepsilon'>0$ by choosing $t=n$ and $\varepsilon=\varepsilon'/(2n)$.
\end{proof}

\begin{proof}[Proof of Corollary~\ref{cor:cor}]
Assume that the $\BPP^{\Avoid}$ machine runs in time $p(n)$ for some polynomial $p(n)$.
By choosing $\varepsilon=\frac1{8p(n)}$ in Theorem~\ref{THM:main1},
all queries will be answered correctly with probability at least $\frac78$,
and the error of the machine will remain bounded.
\end{proof}

\section{Oracle-independent reductions and comparison to previous results}
\label{sec:oracle independent}

We remark that in the procedure solving $\Avoid$ described above, the machine $M$ operates correctly not only when given oracle access to $\MCSP$ itself, but also when $\MCSP$ is replaced by $\MCSP^B$ for any \emph{arbitrary} oracle $B$.
This notion of \emph{oracle-independent} reductions to $\MCSP$ was introduced by Hirahara and Watanabe \cite{HW16},
who observed that most known reductions to $\MCSP$ (including those of \cite{ABKMR06}) operate in this manner (see \cite[Section~3]{HW16}). In that work, they also studied the power of such reductions.
In particular, they proved the following:

\begin{nonumtheorem}[{\protect\cite[Theorem 2]{HW16}}]
If a language $L$ is reducible to $\MCSP$ via an oracle-independent randomized reduction
with negligible error that makes at most one query, then $L \in \AM \mathrel{\cap} \coAM$. 
In other words, \[\bigcap_B \BPP^{\MCSP^B[1]} \subseteq \AM \mathrel{\cap} \coAM.\]
\end{nonumtheorem}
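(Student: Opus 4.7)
The plan is to show that oracle-independence collapses the reduction's power to a class inside $\AM \cap \coAM$. Let $R$ be the given reduction: on randomness $\tau$ it produces a single $\MCSP$ query $q_\tau(x) = (tt_\tau, s_\tau)$ and outputs a bit $r_\tau(x, b)$ as a function of the oracle's answer $b$. The hypothesis is that for every oracle $B$,
\[
\Pr_\tau\!\bigl[\, r_\tau(x,\, \MCSP^B(q_\tau(x))) = L(x) \,\bigr] \;\ge\; 1 - \varepsilon(n),
\]
with $\varepsilon$ negligible. First I would exploit the universal quantifier over $B$: for any labeling $\phi$ of the reduction's potential queries by YES/NO, one can build $B_\phi$ realizing it by embedding, for each ``YES''-labeled $(tt, s)$, the truth table $tt$ into $B_\phi$ so that a single oracle gate (indexed by a short tag of length $O(\log N)$, where $N$ bounds the number of queries in the reduction's support) computes $tt$, while leaving $B_\phi$ silent on ``NO''-labeled queries. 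Plugging this family into the hypothesis and taking a union bound over a sufficiently rich collection of $\phi$'s yields the following structural claim: for every $x$, all but a negligible fraction of $\tau$ satisfy either (i)~$r_\tau(x,0) = r_\tau(x,1) = L(x)$, or (ii)~$\MCSP(tt_\tau, s_\tau) = 1$ and $r_\tau(x,1) = L(x)$.

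Second, I would design the $\AM$ protocol (the $\coAM$ side being symmetric). On input $x$, Arthur samples $\tau$, computes $q_\tau$ together with both $r_\tau(x,0)$ and $r_\tau(x,1)$, and asks Merlin for a circuit $C$ of size $\le s_\tau$ computing $tt_\tau$. Arthur sets $b = 1$ if $C$ is a valid such circuit and $b = 0$ otherwise, and accepts iff $r_\tau(x,b) = 1$. Completeness for $x \in L$ is immediate from the structural claim: in case (i) Arthur accepts regardless of Merlin's message, and in case (ii) Merlin produces the valid circuit as an $\NP$ witness and Arthur computes $r_\tau(x,1) = 1$. The $\coAM$ protocol merely inverts Arthur's final decision.

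The hard part will be soundness. For $x \notin L$, Merlin can try to cheat by withholding a valid circuit even when one exists, thereby forcing $b = 0$ and making Arthur accept whenever $r_\tau(x,0) = 1$. Such dangerous $\tau$ are precisely those case-(ii) configurations with $r_\tau(x,0) \ne r_\tau(x,1)$; bounding their measure requires the structural step to rule them out via the same family $\{B_\phi\}$, by choosing labelings that flip these queries and thereby expose an inconsistency with correctness under some $B_\phi$. Carrying out this soundness analysis in detail, while handling the corner case of queries with very small $s_\tau$ (where $\MCSP$ is decidable in $\PTIME$ anyway by enumeration), is where care is needed; the remainder of the argument is routine.
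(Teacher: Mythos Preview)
The paper under review does not contain a proof of this statement: it is cited from \cite{HW16} and used as a black box, so there is no in-paper argument to compare your attempt against. I therefore evaluate your attempt on its own.

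There is a genuine gap, and it stems from the monotonicity of $\MCSP^{B}$ in $B$. You assert that for \emph{any} YES/NO labeling $\phi$ of the reduction's potential queries one can build an oracle $B_\phi$ realizing it. That is false: whenever $\MCSP(tt,s)=1$---i.e., $tt$ already admits an oracle-free circuit of size at most $s$---one has $\MCSP^{B}(tt,s)=1$ for \emph{every} $B$, because that same circuit is a valid $B$-oracle circuit. Embedding truth tables into $B$ can push an answer from NO to YES, never from YES to NO; only labelings $\phi\ge\MCSP$ (pointwise) are realizable.

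This is exactly what breaks your soundness step. For $x\notin L$, the dangerous $\tau$ are those in your case~(ii) with $r_\tau(x,0)\neq r_\tau(x,1)$: here $\MCSP(q_\tau)=1$, $r_\tau(x,1)=L(x)=0$, and $r_\tau(x,0)=1$, so a cheating Merlin withholds the witness and Arthur accepts. You propose to bound the measure of such $\tau$ by ``choosing labelings that flip these queries,'' but flipping a YES instance to NO is impossible by the monotonicity above. Worse, for every such $\tau$ the oracle answer is $1$ under \emph{all} $B$, so the oracle-independence hypothesis constrains only $r_\tau(x,1)$ and says nothing about $r_\tau(x,0)$. Concretely, take any $L\in\BPP$ and let the reduction always issue a trivial YES query (say, the all-zeros truth table with $s=1$), set $r_\tau(x,1)=L(x)$, and set $r_\tau(x,0)=1$; this is a perfectly valid oracle-independent one-query reduction, yet your protocol accepts every $x\notin L$ with probability $1$. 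The argument is therefore unsound as written, and the sketched patch cannot repair it.
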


They also showed that deterministic, oracle-independent, polynomial-time Turing reductions do not provide additional power, \emph{regardless} of the number of queries:

\begin{nonumtheorem}[{\protect\cite[Theorem 1]{HW16}}]
$\bigcap \limits _B \PTIME^{\MCSP^B} = \PTIME.$
\end{nonumtheorem}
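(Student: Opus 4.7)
The plan is to show the non-trivial direction: any $L \in \bigcap_B \PTIME^{\MCSP^B}$ lies in $\PTIME$, by simulating the reduction without oracle access and then using oracle-independence to certify correctness. Let $M^\bullet$ be the oracle-independent polynomial-time machine witnessing $L \in \bigcap_B \PTIME^{\MCSP^B}$, running in time $p(n)$. I would construct a deterministic polynomial-time machine $M'$ that, on input $x$, simulates $M$ step by step and, at each query $(tt_i, s_i)$ where $tt_i$ is the truth table of a function on $k_i$ variables (so $|tt_i| = 2^{k_i} \leq p(n)$ and hence $k_i \leq \log p(n)$), returns a polynomial-time-computable heuristic answer --- for instance, $1$ if $s_i$ exceeds the Shannon upper bound $C \cdot 2^{k_i}/k_i$ on plain circuit complexity, and $0$ otherwise. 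Because $k_i$ is logarithmic and the threshold test is immediate, $M'$ runs in polynomial time.

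For correctness, I would exhibit, for each $x$, an oracle $B^* = B^*(x)$ such that $\MCSP^{B^*}$ agrees with the heuristic on every query made by $M'(x)$. Oracle-independence of $M$ then forces $M^{\MCSP^{B^*}}(x) = M'(x)$, and since $M^{\MCSP^{B^*}}$ decides $L$, we get $M'(x) = L(x)$. The ``$1$'' answers are automatic for every $B^*$ by Shannon's upper bound (the trivial oracle-free circuit already fits under $s_i$). The existence of $B^*$ realizing the ``$0$'' answers would follow from a probabilistic counting argument: for each such query $(tt_i, s_i)$ the number of size-$s_i$ $B^*$-oracle circuit structures is $s_i^{O(s_i)}$, and a uniformly random $B^*$ causes a fixed circuit to compute the fixed truth table $tt_i$ with probability at most $2^{-2^{k_i}}$ (since each of the $2^{k_i}$ output bits becomes an essentially random function of the oracle bits the circuit reads); a union bound over circuits and then over the $\leq p(n)$ queries would produce a common $B^*$, provided $s_i \log s_i$ is suitably smaller than $2^{k_i}$.

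The main obstacle is calibrating the heuristic so that both halves of the correctness argument go through simultaneously. The Shannon upper bound and the random-oracle lower bound do not meet exactly: there is a narrow intermediate band of $(tt_i, s_i)$ with $s_i$ close to but below $2^{k_i}/k_i$ in which neither argument directly applies. Bridging this gap is the technical heart of the proof. I would attempt either to tighten the threshold by a constant factor via a sharper counting bound, to construct $B^*$ adaptively one query at a time using the freedom left by previous constraints, or, following the spirit of the Hirahara--Watanabe argument, to exploit the deterministic query tree of $M$ directly: each root-to-leaf path specifies a full answer pattern, and one could in polynomial time search for a branch whose pattern is realizable by some $B^*$, thereby sidestepping the gap altogether.
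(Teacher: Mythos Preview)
First, note that the paper does not prove this statement: it is quoted from Hirahara--Watanabe \cite{HW16} as background, with no argument supplied. So there is no in-paper proof to compare against, and I assess your sketch on its own terms.

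Your approach has a structural flaw, not just the calibration issue you flag. You answer NO whenever $s_i$ lies below the Shannon bound and hope to realize those NO answers with a random $B^*$. But passing to $B$-oracle circuits can only \emph{lower} circuit complexity: if $tt_i$ already has an ordinary oracle-free circuit of size at most $s_i$ --- which can certainly occur for $s_i$ far below $2^{k_i}/k_i$ --- then $\MCSP^{B}(tt_i,s_i)=1$ for \emph{every} $B$, and no $B^*$ can make your NO correct. Correspondingly, your counting step is false as stated: for a circuit with no oracle gates the event ``$C^{B}$ computes $tt_i$'' has probability $0$ or $1$ over a random $B$, so the union bound over all size-$s_i$ circuits never drops below $1$. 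Patching this would require deciding unrelativized $\MCSP$ on the queried instances, and although $k_i\le\log p(n)$, the size parameter $s_i$ can be as large as $\Theta(p(n))$, so brute force is exponential.

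The actual \cite{HW16} argument runs the monotonicity in the opposite direction: answer YES to essentially every query, and build $B_x$ by writing each queried truth table into $B_x$ at a fresh oracle-query length, so that a single oracle gate (with constant-$0$ padding) already computes it. Every queried function then has $B_x$-oracle complexity at most some absolute constant $c$; queries with $s_i\ge c$ are answered YES and are correct for this $B_x$, while the $O(1)$ circuit shapes of size below $c$ are dispatched by brute force. Your closing ``search the query tree'' idea is closer in spirit, but the missing insight is precisely that the YES branch is always realizable once $s_i\ge c$, so no search is needed.
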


As our procedure for solving $\Avoid$ is oracle-independent 
(see also Remark \ref{rem:oracle independent}) we obtain the following ``oracle-independent'' extension to Corollary \ref{cor:cor}: 

\begin{COROLARY}
$\displaystyle \BPP^{\Avoid} \subseteq \bigcap\limits_B \BPP^{\MCSP^B}$.
\end{COROLARY}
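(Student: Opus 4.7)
The plan is to observe that the entire proof of Theorem~\ref{THM:main1}, and hence of Corollary~\ref{cor:cor}, is already oracle-independent, so the stronger statement comes essentially for free once this is made explicit. The only place where the $\MCSP$ oracle is consulted in the algorithm from Theorem~\ref{THM:main1} is inside the inverter $M^{\MCSP}$ from Lemma~\ref{lem:inv}; the rest of the algorithm is uniform sampling and evaluation of the input circuit $C$.

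First, I would invoke Remark~\ref{rem:oracle independent} to upgrade Lemma~\ref{lem:inv} to the statement that, for \emph{every} oracle $B$, the machine $M^{\MCSP^B}$ satisfies
\[
\Pr_{\size{x}=n,\;\tau}\!\left[\,C\!\left(M^{\MCSP^B}(1/\varepsilon,C,C(x),\tau)\right) = C(x)\right] \geq 1 - \varepsilon.
\]
Since the definitions of the event $I_{C,y,\tau}$ and of the supporting lemmas only refer to the output behavior of $M$, both Lemmas and the proof of Theorem~\ref{THM:main1} go through verbatim with $\MCSP^B$ in place of $\MCSP$. Consequently, for every oracle $B$, the same randomized procedure solves $\Avoid$ with error at most $\varepsilon'$ in time polynomial in $1/\varepsilon'$ and $\size{C}$, using only $\MCSP^B$ as an oracle.

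Next, I would replay the proof of Corollary~\ref{cor:cor} in the oracle-independent setting. Fix $L \in \BPP^{\Avoid}$ and a $\BPP^{\Avoid}$ machine for $L$ running in time $p(n)$. For any oracle $B$, replace each $\Avoid$ query by a run of the procedure above with $\MCSP^B$, choosing $\varepsilon = \tfrac{1}{8p(n)}$. A union bound ensures that all of the at most $p(n)$ queries are answered by valid non-images with probability at least $7/8$, so the resulting $\BPP^{\MCSP^B}$ machine decides $L$ with bounded error. Since $B$ was arbitrary, this places $L$ in $\bigcap_B \BPP^{\MCSP^B}$.

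The main (and only) non-routine point is the oracle-independence assertion in Remark~\ref{rem:oracle independent}, that is, that the inversion procedure of \cite{ABKMR06} underlying Lemma~\ref{lem:inv} never uses any structural property of $\MCSP$ beyond the ability to recognize a sufficiently dense ``natural property''; since $\MCSP^B$ retains this property for every $B$, the substitution is harmless. Everything else is a straightforward transfer of the already-given arguments.
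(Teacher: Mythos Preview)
Your proposal is correct and follows exactly the paper's own approach: the paper likewise derives this corollary simply by noting that the procedure solving $\Avoid$ is oracle-independent (pointing to Remark~\ref{rem:oracle independent}) and hence Corollary~\ref{cor:cor} immediately upgrades to the intersection over all $B$. If anything, you spell out the argument in more detail than the paper does.
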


Unfortunately, the above results/characterizations by \cite{HW16} do not apply to our procedure since our reduction is randomized and makes multiple queries. 
Meanwhile, a recent result of Ghentiyala, Li, and Stephens-Davidowitz \cite{GLS25} shows that $\BPP^{\Avoid}\subseteq\AM\mathrel{\cap}\coAM$. 
Taken together, these results provide supporting evidence for the following conjecture, which we put forward:

\begin{conjecture}
$\displaystyle\bigcap\limits_B \BPP^{\MCSP^B} \subseteq \AM \mathrel{\cap} \coAM$.
\end{conjecture}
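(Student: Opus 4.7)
The natural plan is to extend the one-query oracle-independent result of Hirahara and Watanabe cited above (their Theorem 2) to the fully adaptive randomized multi-query setting. The starting observation is that oracle-independence gives us, in the analysis, the freedom to pick $B$ adversarially so as to render individual $\MCSP^B$ queries tractable within $\AM \cap \coAM$. For a query $(tt, s)$ with a ``yes'' answer, Merlin sends a claimed size-$\leq s$ $B$-oracle circuit $C$ together with the values of $B$ on the polynomially many queries that evaluating $C$ on all inputs would make; Arthur then verifies that this evaluation reproduces $tt$. For a ``no'' answer, one would attempt a Goldwasser--Sipser set lower-bound protocol, or a reduction to approximate counting \`a la Stockmeyer, applied to the set of size-$s$ $B$-oracle circuits consistent with $tt$.

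The multi-query, adaptive issue would then be handled by having Merlin commit to the entire transcript of queries and answers upfront; Arthur verifies consistency and each individual query in parallel, using that $\AM \cap \coAM$ is closed under $\PTIME$-Turing reductions and that $\AM = \AM[O(1)]$. A cleaner alternative route that I would pursue in parallel is to show $\bigcap_B \BPP^{\MCSP^B} \subseteq \BPP^{\Avoid}$ directly, and then invoke the Ghentiyala--Li--Stephens-Davidowitz result $\BPP^{\Avoid} \subseteq \AM \cap \coAM$; this avoids reproving an $\AM \cap \coAM$ upper bound from scratch, and it also explains, within the framework of this paper, why one should believe the conjecture in the first place.

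The main obstacle, on either route, is the $\coAM$ side: certifying that a specific, fixed truth table has no size-$s$ $B$-oracle circuit. Counting-based lower-bound protocols (such as Goldwasser--Sipser) work well for \emph{sets} of inputs but give nothing for a single function, and $\MCSP$ is not known to lie in $\coAM$ precisely because of this difficulty. The hope is that oracle-independence provides enough slack (via a cleverly tailored choice of $B$, or a different $B_i$ per query) to circumvent it, but this is the speculative step. The $\BPP^{\Avoid}$-reduction route faces a parallel difficulty: $\Avoid$ produces truth tables lying outside the image of a circuit-encoding map and hence yields ``hard'' functions, but it is unclear how to use $\Avoid$ to answer an $\MCSP^B$ decision query about a \emph{specific} given truth table. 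Overcoming either of these two obstacles is, in my view, the crux of the conjecture.
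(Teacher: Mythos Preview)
The statement you are attempting is not a theorem in the paper: it is explicitly labelled and put forward as a \emph{conjecture}. The paper gives no proof; it only assembles circumstantial evidence (the two Hirahara--Watanabe theorems, the containment $\BPP^{\Avoid}\subseteq\AM\cap\coAM$ of Ghentiyala--Li--Stephens-Davidowitz, and the Allender--Das $\SZK$ result) and then asks whether the techniques of \cite{GLS25} could be pushed further. So there is no ``paper's own proof'' to compare your attempt against.

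Your write-up is, appropriately, not a proof either but a research plan, and you are candid about this: you flag the $\coAM$ side (certifying that a \emph{given} truth table has no small $B$-oracle circuit) as the speculative step and ``the crux of the conjecture.'' That diagnosis matches the paper's own stance, which singles out exactly the gap between the one-query theorem of \cite{HW16} and the multi-query randomized setting. One additional wrinkle you gloss over on the $\AM$ side: when Merlin ``sends the values of $B$'' needed to evaluate his claimed circuit, Arthur has no way to check those values against a fixed $B$, so you are implicitly letting Merlin \emph{choose} a (partial) $B$; you then need that Merlin's choices are globally consistent across all queries in the committed transcript, and that the underlying reduction's correctness for \emph{every} $B$ survives this adversarial, query-dependent instantiation. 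That consistency/soundness issue is nontrivial and interacts with the adaptive multi-query commitment you propose. In short: your outline is reasonable as a plan of attack, you have located the same obstacle the paper implicitly leaves open, and neither you nor the paper resolves it.
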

That is, if a language $L$ is reducible to $\MCSP$ via an oracle-independent randomized reduction then $L \in \AM \mathrel{\cap} \coAM$, regardless of the number of queries.  \\

At the same time, Sdroievski, da Silva, and Vignatti \cite{SSV19} devised a randomized algorithm that, given oracle access to $\MCSP$, solves the Hidden Subgroup Problem ($\HSP$). We note that $\HSP$ is a central problem in the theory of quantum computing. Incidentally, their reduction is also oracle-independent\footnote{Although the formal statement is that $\HSP$ can be solved in $\BPP^\MTKP$, their reduction is actually based on Theorem 45 of \cite{ABKMR06} (Lemma~\ref{lem:inv}). As in our case, this results in an oracle-independent containment in $\BPP^{\MCSP}$.}. Thus, if true, the conjecture would not only imply that our result provides an alternative (and potentially stronger) avenue for placing $\BPP^{\Avoid}$ in $\AM \cap \coAM$, but would also place $\BPP^{\HSP}$ in $\AM \cap \coAM$, which, to the best of our knowledge, is currently unknown. \\

An additional piece of evidence supporting the conjecture (as observed in \cite{HW16}) was provided by the result of Allender and Das \cite{AllenderDas14}, which shows that $\SZK \subseteq \BPP^\MCSP$ via an oracle-independent reduction. This complements the previously known containment $\SZK \subseteq \AM \cap \coAM$ \cite{Fortnow89,Okamoto00,SV03}.
Could the techniques of \cite{GLS25} be extended to prove the conjecture?

\bibliographystyle{alpha}

\newcommand{\etalchar}[1]{$^{#1}$}

\appendix

\section{Proof sketch of Lemma~\protect\ref{lem:inv}}
\label{sec:app:miss}

\begin{proof}[Proof Sketch of Lemma \ref{lem:inv}]
  
    We need essentially to apply a standard construction of converting a weak one-way
    function into a strong one-way function and to take care of the ``uniform'' setting.

    In terms of Lemma~\ref{lem:Th45} we will define the following function $f_z(x)$:
    \[
    f_C(x)=(C(x_1),C(x_2),\ldots,C(x_t)),
    \]
    for $x_i\in\bools{n}$ and for
    large enough $t=O(\frac{n}{\varepsilon})$.
    Then similarly to \cite{Yao82} 
    the following algorithm $M^{\MCSP}$
    succeeds with probability $1-1/\varepsilon$: it repeats 
    an even larger polynomial number of times the following: 
    it takes $j\in\{1,\ldots,t\}$ at random, takes all $x_i$'s at random,
    employs $A^{\MCSP}$ on $(x_1,\ldots,x_t)$ with $x_j$ replaced by $y$,
    and outputs the result if it is correct.

    Now to make our function uniform in $C$ (that is,
    make its time polynomial in $|x|$ only), assume that
    $C$ has at least $\sqrt{|C|}$ inputs.
    If it has fewer inputs, add enough fake inputs (and outputs) to it
    and consider a new circuit $D'$ that computes the identity function on them.
    It won't change the probability of inverting this circuit
    (as $\varepsilon$) is a constant, and the running time of $M^\bullet$ will
    become polynomial in $|C|$ (rather than just in $n$).
\end{proof}

\end{document}